\newtheorem{thm}{Theorem}[section]
\newtheorem{theorem}[thm]{Theorem}
\newtheorem{lemma}[thm]{Lemma}
\newtheorem{proposition}[thm]{Proposition}
\theoremstyle{definition}
\newtheorem{remark}[thm]{Remark}
\newcommand{\bb}[1]{\mathbb{#1}}
\newcommand{\cl}[1]{\mathcal{#1}}
\def\dc#1{\expandafter\def\csname#1\endcsname{\mathcal{#1}}}
\def\db#1{\expandafter\def\csname b#1\endcsname{\mathbb{#1}}}
\def\loopy#1#2{%
  \def#1##1{\def\next{#2{##1}#1}\ifx##1\relax\let\next\relax\fi\next}}
\loopy{\makemathcals}{\dc}\loopy{\makemathbbs}{\db}
\let\epsilon\varepsilon
\DeclareMathOperator{\dist}{dist}
\newcommand{\dcb}{\dist_{cb}}
\DeclareMathOperator{\id}{id}
\DeclareMathOperator{\conv}{conv}
\DeclareMathOperator{\tr}{tr}
\DeclareMathOperator{\Tr}{Tr}
\DeclareMathOperator{\MU}{MU}
\begin{document}

\title{Schur multipliers and mixed unitary maps}

\author[S.~J.~Harris]{Samuel J.~Harris\textsuperscript{1}}
\email{sj2harri@uwaterloo.ca}
\author[R.~H.~Levene]{Rupert H. Levene\textsuperscript{2}}
\email{rupert.levene@ucd.ie}
\author[V.~I.~Paulsen]{Vern I.~Paulsen\textsuperscript{1}}
\email{vpaulsen@uwaterloo.ca}
\author[S.~Plosker]{Sarah Plosker\textsuperscript{3}}
\email{ploskers@brandonu.ca}
\author[M.~Rahaman]{Mizanur Rahaman\textsuperscript{1}}
\email{mizanur.rahaman@uwaterloo.ca}

\thanks{\textsuperscript{1}Institute for Quantum Computing and Dept.~of Pure Math., University of Waterloo, Waterloo, Ontario N2L 3G1, Canada}
\thanks{\textsuperscript{2}School of Mathematics and Statistics, University College Dublin, Belfield, Dublin 4, Ireland}
\thanks{\textsuperscript{3}Department of Mathematics and Computer Science, Brandon University, Brandon, MB R7A 6A9, Canada}

\begin{abstract}
We consider the tensor product of the completely depolarising channel on $d\times d$ matrices with the map of Schur multiplication by a  $k \times k$ correlation matrix and characterise, via matrix theory methods, when such a map is a mixed (random) unitary channel. When $d=1$, this recovers a result of O'Meara and Pereira, and for larger $d$ is equivalent to a result of Haagerup and Musat that was originally obtained via the theory of factorisation through von Neumann algebras. We obtain a bound on the distance between a given correlation matrix for which this tensor product is nearly mixed unitary and a correlation matrix for which such a map is exactly mixed unitary. This bound allows us to give an elementary proof of another result of Haagerup and Musat about the closure of such correlation matrices without appealing to the theory of von Neumann algebras. 
\end{abstract}

\keywords{Schur multiplier; correlation matrix; mixed unitary channel; completely depolarising channel}
\maketitle

\section{Introduction}

For any $k\in \bN$, let $M_k$ denote the set of all $k\times k$ complex-valued matrices. A \emph{quantum channel} is a completely positive, trace-preserving linear map $\phi:M_m\rightarrow M_n$; such a map can be written in its (non-unique) Choi-Kraus decomposition as $\phi(X)=\sum_{i=1}^MA_iXA_i^*$, where the $A_i$ are $n\times m$ matrices known as the \emph{Kraus operators}, and ${A}^*$ is the complex conjugate transpose of $A$. Trace-preservation yields $\sum_{i=1}^MA_i^*A_i=I_m$, where $I_m$ denotes the unit of the algebra $M_m$, i.e., the diagonal matrix with 1's on the diagonal. The map $\phi$ is \emph{unital} if $\phi(I_m) = I_n$, which is equivalent to having $\sum_{i=1}^MA_iA_i^*=I_n$. 

Given $d\in \bN$, the \emph{completely depolarising channel} $\delta_d: M_d \to M_d$ is the unital quantum channel given by  $\delta_d(X) = \tr_d(X) I_d$, where we let $\tr_d(X) = \frac{1}{d} \Tr(X)$ denote the normalised trace on $M_d$ with $\tr_d(I_d) =1$.

Let $\U(d)$ be the set of unitary elements of $M_d$. A map $\phi: M_d \to M_d$ is said to be \emph{mixed unitary}
if it is in the convex hull of the maps of the form $X\mapsto UXU^*$ for $U\in \U(d)$. Mixed unitary maps are unital quantum channels, and are sometimes also referred to as \emph{random unitary channels} in the literature, in particular in the context of quantum cryptography.  

Given $C=(c_{i,j})$ and $X=(x_{i,j})$ in $M_k$, their \emph{Schur product} is the matrix $C \circ X = (c_{i,j}x_{i,j})$.
For any $C\in M_k$, the corresponding \emph{Schur multiplier} is the map $S_C:M_k\rightarrow M_k$ given by Schur multiplication: $X\mapsto C\circ X$. The Schur multiplier $S_C$ is a unital quantum channel if and only if $C$ is a \emph{correlation matrix}, i.e., a positive semidefinite matrix whose diagonal elements are all equal to $1$.

O'Meara and Pereira~\cite{per} characterised the set of correlation matrices $C$ such that $S_C$ is  a mixed unitary map, via matrix theory methods.  The work of Haagerup and Musat~\cite{hm11,hm15} on factorisation of completely positive maps through von Neumann algebras can be seen as yielding
higher order versions of the O'Meara-Pereira result, as well as an important asymptotic version that relates limits of certain correlation matrices to Connes' embedding problem.  The relation to Connes' embedding problem uses the work of Dykema and Juschenko~\cite{DJ} on matrices of unitary moments.  Our work is also motivated by the recent work of Musat and R\o rdam~\cite{mr}, which builds on the earlier work of Haagerup and Musat and makes this asymptotic relation to correlation matrices explicit. 

The results of Haagerup and Musat rely on the theory of factorisation through von Neumann algebras and properties of ultrapowers of the hyperfinite II${}_1$ factor in their proofs.
The main goal of this paper is to obtain these higher order analogues of the O'Meara-Pereira result and their asymptotic versions by matrix theoretic methods, without reference to factorisation through von Neumann algebras.  Our main new result, which allows us to take this approach, is a theorem that says, roughly, if $C$ is a correlation matrix such that $\delta_d \otimes S_C$ is \emph{nearly} a mixed unitary map, then $C$ is \emph{near} to a correlation matrix $\widetilde{C}$ such that $\delta_{2d} \otimes S_{\widetilde{C}}$ is a mixed unitary map, with explicit bounds.

In Section~2, for each $d$ and $k$, we characterise the $k \times k$ correlation matrices such that $\delta_d \otimes S_C$ is a mixed unitary map. The result we obtain is equivalent to a corresponding result of~\cite{hm11} once one realizes that \emph{exact factorisation through $M_d \otimes M_k \otimes L^{\infty}[0,1]$} reduces to our statement about convex combinations. In Section~3, we prove the ``nearly'' result and use it to give a new proof of the asymptotic results for correlation matrices of~\cite{hm11,hm15}.

\section{Schur multipliers with a finite mixed-unitary ancilla}\label{Sec:genP}

In this section we obtain ``higher order'' versions in the spirit of the result of~\cite{per}.

The following result is likely well known, but we prove it here for completeness. Given a subset $S$ of a vector space, we use $\conv(S)$ to denote the, not necessarily closed, set of convex combinations of elements of $S$. 

\begin{lemma}\label{lem:compactconv}
  Let $A_d$ be a subset of the closed unit ball of $\bR^m$ for $d\in\bN$. If $\overline{\bigcup_d A_d}$ is convex, then \[\overline{\bigcup_d A_d}=\overline{\bigcup_{d}\conv(A_d)}.\]
\end{lemma}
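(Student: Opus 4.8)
The plan is to prove the two inclusions between $\overline{\bigcup_d A_d}$ and $\overline{\bigcup_d \conv(A_d)}$ separately. Write $U=\bigcup_d A_d$ and $V=\bigcup_d\conv(A_d)$ for brevity. The inclusion $\overline{U}\subseteq\overline{V}$ is immediate and requires no hypothesis: since $A_d\subseteq\conv(A_d)\subseteq V$ for every $d$, we have $U\subseteq V$, and taking closures gives $\overline{U}\subseteq\overline{V}$.

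The content is the reverse inclusion $\overline{V}\subseteq\overline{U}$. Because $\overline{U}$ is closed, it suffices to show $V\subseteq\overline{U}$, and then take closures. So fix an arbitrary $x\in V$. By definition of $V$, there is some $d$ with $x\in\conv(A_d)$, meaning $x=\sum_{i=1}^n\lambda_i a_i$ for finitely many points $a_1,\dots,a_n\in A_d$ and scalars $\lambda_i\ge 0$ with $\sum_i\lambda_i=1$. Each $a_i$ lies in $A_d\subseteq U\subseteq\overline{U}$. This is the step where the hypothesis enters: by assumption $\overline{U}=\overline{\bigcup_d A_d}$ is convex, so the convex combination $x=\sum_i\lambda_i a_i$ of points of $\overline{U}$ again lies in $\overline{U}$. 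Hence $V\subseteq\overline{U}$, and therefore $\overline{V}\subseteq\overline{U}$.

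Combining the two inclusions yields the claimed equality. I do not expect any real obstacle here; the only subtlety worth flagging is that convexity of the closure is used in an essential way (it is precisely what lets a convex combination drawn from a single $A_d$ be reabsorbed into $\overline{U}$), whereas the boundedness assumption that each $A_d$ lies in the closed unit ball is not actually needed for this particular argument and presumably reflects the setting in which the lemma will be applied.
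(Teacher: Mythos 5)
Your proof is correct, and it is genuinely simpler than the one in the paper. The key observation you make is that, since $\overline{U}$ is closed, it suffices to prove the \emph{pointwise} inclusion $V\subseteq\overline{U}$, and each point of $V$ is by definition a finite convex combination of points of a single $A_d\subseteq\overline{U}$, so convexity of $\overline{U}$ absorbs it immediately. The paper instead works with an arbitrary element $b$ of the closure $\overline{V}$ from the outset, writing $b=\lim_n b_n$ with $b_n\in\conv(A_{d_n})$; to pass to the limit it must invoke Carath\'eodory's theorem (to get a uniform bound $N$ on the number of terms in each convex combination) and compactness of $[0,1]$ and of $A=\overline{U}$ (to extract convergent subsequences of the weights and the points), finally concluding $b\in\conv(A)=A$. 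Your route dispenses with both Carath\'eodory and the compactness/subsequence extraction, and, as you correctly flag, it does not use the hypothesis that the $A_d$ lie in the closed unit ball --- that hypothesis is only needed to make the paper's compactness argument run (and, elsewhere in the paper, to guarantee compactness of the sets $\F_k(d)$ and their convex hulls). Both arguments use convexity of $\overline{\bigcup_d A_d}$ in the same essential way, namely that a convex set contains all finite convex combinations of its elements; yours just applies it before taking limits rather than after.
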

\begin{proof}
  Let $A=\overline{\bigcup_d A_d}$ and $B=\overline{\bigcup_{d}\conv(A_d)}$. Plainly, $A\subseteq B$. Conversely, let $b\in B$. Then $b=\lim_{n\to \infty}  b_n$ for some $b_n\in\conv(A_{d_n})$ where $d_n\in \bN$. By Carath\'{e}odory's theorem, there is $N\in \bN$ (independent of $n$) so that  $b_n=\sum_{l=1}^N p_{n,l}a_{n,l}$ for some probability distribution $p_{n,1},\dots,p_{n,N}\in [0,1]$ and some $a_{n,l}\in A_{d_n}$. By compactness of $[0,1]$ and of $A$, we may pass to a subsequence for which  $(p_{n,l})_n$ and $(a_{n,l})_n$ are convergent, say to $p_l\in [0,1]$ and $a_l\in A$. So \[b=\lim_{n\to \infty} \sum_{l=1}^N p_{n,l}a_{n,l}=\sum_{l=1}^N p_la_l\in \conv(A)=A, \]
as desired. \end{proof}

We now define some interesting sets of correlation matrices. For any $d,k\in \bb{N}$, let \[\cl{F}_{k}(d)=\{(\tr_d(U_i^*U_j))_{i,j=1}^k\in M_k\,:\, U_1, \dots, U_k\in \U(d)\}.\] 
 Since the normalised trace $\tr_d\colon M_d\to \bC$ is unital and completely positive, it follows that $\F_k(d)$ is indeed a set of correlation matrices.  We remark that
since the convex hull of a compact set in $\bR^n$ is compact, the sets $\F_k(d)$ and $\conv(\cl{F}_k(d))$ are both compact, for any $d\in \bN$. 
Adopting the notation of~\cite{DJ}, we define \[\cl{F}_k=\overline{\bigcup_{d\in \bb{N}}\cl{F}_{k}(d)}.\]
By Lemma~\ref{lem:compactconv}, we also have
\[ \cl{F}_k=\overline{\bigcup_{d\in \bb{N}}\conv(\cl{F}_{k}(d))}.\]

We use $E_{i,j}= |i\rangle \langle j |$ to denote the standard matrix units. 
Let $d\in \bN$ and $\omega=\exp\left(\frac{2\pi i}{d}\right)$. 
Let $S=\sum_{j=1}^d E_{j+1,j} \in M_d$ be the
forward cyclic shift operator and let $D$ be the diagonal operator given by
$D=\sum_{j=1}^d \omega^jE_{j,j} \in M_d$.  The Weyl-Heisenberg unitaries are the $d^2$ unitaries given by
\[ W_{a,b} = S^a D^b, \, \, \, 0 \le a,b \le d-1.\]
This set is a projective unitary group, i.e., modulo scalars it is a group.

Recall that the completely depolarising channel on $M_d$ is the map
\[\delta_d\colon M_d\to M_d,\quad \delta_d(X)=\tr_d(X)I_d.\]
It is not difficult to show~(see~\cite[Chapter~4]{watrous}) that
\begin{equation}
\label{eq:delta-MU}
\delta_d(X) = \frac{1}{d^2} \sum_{a,b=0}^{d-1} W_{a,b}XW_{a,b}^*,\end{equation} 
so $\delta_d$ is mixed unitary.

We recall a theorem of O'Meara and Pereira:
\begin{theorem}[{\cite[Theorem~5]{per}}]\label{thm:per} Let $k\in \bN$ and let $C\in M_k$ be a correlation
  matrix. The Schur multiplier $S_C\colon M_k\to M_k$ is mixed unitary
  if and only if $C$ is in $\conv (\F_k(1))$, i.e., $C$ lies in the convex hull of the rank-one correlation matrices in $M_k$.
\end{theorem}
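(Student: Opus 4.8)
The plan is to first make $\F_k(1)$ explicit and then prove the two implications, the forward one being essentially a computation and the reverse one resting on an extreme-point argument. Since $\U(1)$ is the unit circle and $\tr_1$ is the identity on $M_1=\bC$, a matrix lies in $\F_k(1)$ exactly when it has the form $(\overline{u_i}u_j)_{i,j}$ with $|u_i|=1$; writing $v_i=\overline{u_i}$ shows that $\F_k(1)$ is precisely the set of rank-one correlation matrices $vv^*$ whose entries $v_i$ are unimodular. So the claim to be proved is that $S_C$ is mixed unitary if and only if $C\in\conv(\F_k(1))$.

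For the easy direction, suppose $C=\sum_l p_l v_lv_l^*$ is a convex combination of rank-one correlation matrices. I would first observe that for a single such matrix $vv^*$ the Schur multiplier acts as $S_{vv^*}(X)=DXD^*$, where $D=\diag(v_1,\dots,v_k)$ is unitary, since $(DXD^*)_{i,j}=v_i X_{i,j}\overline{v_j}=(vv^*\circ X)_{i,j}$. As $C\mapsto S_C$ is linear in $C$, it follows that $S_C=\sum_l p_l\,S_{v_lv_l^*}$ is a convex combination of unitary conjugations, hence mixed unitary.

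For the reverse direction, suppose $S_C=\sum_l p_l\,U_l(\cdot)U_l^*$ with each $U_l\in\U(k)$ and $p_l>0$. I would evaluate at the matrix unit $E_{i,i}$ and use $c_{i,i}=1$ to get $\sum_l p_l\,U_l E_{i,i}U_l^*=E_{i,i}$. Writing $u_l^{(i)}=U_l|i\rangle$ for the $i$-th column of $U_l$, this reads $\sum_l p_l\,|u_l^{(i)}\rangle\langle u_l^{(i)}|=|i\rangle\langle i|$, i.e.\ a convex combination of pure states equals the pure state $|i\rangle\langle i|$. Since a pure state is an extreme point of the state space of $M_k$, every term must coincide with it, forcing $u_l^{(i)}=\lambda_{l,i}|i\rangle$ with $|\lambda_{l,i}|=1$. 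As this holds for each $i$, each $U_l$ is diagonal, $U_l=\diag(\lambda_{l,1},\dots,\lambda_{l,k})$. Reading off the $(i,j)$ entry of $c_{i,j}E_{i,j}=\sum_l p_l\,U_l E_{i,j}U_l^*$ then yields $c_{i,j}=\sum_l p_l\,\lambda_{l,i}\overline{\lambda_{l,j}}$, so with $v_l=(\lambda_{l,1},\dots,\lambda_{l,k})$ we obtain $C=\sum_l p_l\,v_lv_l^*\in\conv(\F_k(1))$.

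The main obstacle is the reverse direction, and specifically the step that forces the $U_l$ to be diagonal. The extreme-point argument above is the crux: it converts the unitality constraint $c_{i,i}=1$, equivalently the fact that $S_C$ fixes each diagonal matrix unit, into rigid information about the columns of the $U_l$; once diagonality is established the rest is purely formal. The one point I would double-check is that the extreme-point reasoning applies simultaneously for all $i$ with the \emph{same} weights $p_l$, which it does, since a single decomposition $S_C=\sum_l p_l\,U_l(\cdot)U_l^*$ is used throughout.
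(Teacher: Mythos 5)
Your proof is correct and takes essentially the same route as the paper, which obtains this statement as the $d=1$ case of Theorem~\ref{thm:gen}: there, evaluating the mixed-unitary decomposition at the diagonal matrix units and noting that a vanishing sum of positive semidefinite terms forces each term to vanish shows the unitaries are (block) diagonal, which is precisely what your extreme-point argument accomplishes for $d=1$. The remaining steps---reading off $c_{i,j}$ from the action on $E_{i,j}$, and the converse via $S_{vv^*}(X)=DXD^*$ with $D=\diag(v_1,\dots,v_k)$ unitary---match the paper's computation.
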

We generalise Theorem~\ref{thm:per} as follows:
\begin{theorem}\label{thm:gen}
  Let $C\in M_k$ be a correlation matrix and let $d\in \bN$. The map $\delta_d\otimes S_C\colon M_d\otimes M_k\to M_d\otimes M_k$ is mixed unitary if and only if 
  \[ C\in \conv(\cl{F}_k(d)).
  \]
\end{theorem}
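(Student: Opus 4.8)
The plan is to pass to the block-matrix picture: identify $M_d\otimes M_k$ with the $k\times k$ arrays of $d\times d$ matrices by writing $Z=\sum_{i,j=1}^k Z_{i,j}\otimes E_{i,j}$ with $Z_{i,j}\in M_d$ and $E_{i,j}\in M_k$, and call $Z_{i,j}$ the $(i,j)$ block. A one-line computation on elementary tensors shows that, in this picture, $\delta_d\otimes S_C$ sends the block $Z_{i,j}$ to $c_{i,j}\tr_d(Z_{i,j})I_d$, where $C=(c_{i,j})$. Both directions of the equivalence are extracted from this single formula.

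For the ``if'' direction I would first fix a single $C=(\tr_d(U_i^*U_j))\in\cl{F}_k(d)$ and realise $\delta_d\otimes S_C$ as a composite of two twirls. Introduce the block-diagonal unitaries $R_{a,b}=\sum_{i=1}^k (U_i^*W_{a,b})\otimes E_{i,i}$ and average conjugation by them over the Weyl--Heisenberg indices; using $\frac{1}{d^2}\sum_{a,b}W_{a,b}XW_{a,b}^*=\tr_d(X)I_d$, this twirl sends the block $Z_{i,j}$ to $\tr_d(Z_{i,j})\,U_i^*U_j$. Following it by the depolarising twirl $\delta_d\otimes\id_k$ (conjugation by $W_{a',b'}\otimes I_k$) replaces $U_i^*U_j$ with $\tr_d(U_i^*U_j)I_d=c_{i,j}I_d$, so the composite is exactly $\delta_d\otimes S_C$. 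Since a composite, and a convex combination, of mixed unitary maps is again mixed unitary, the general case $C\in\conv(\cl{F}_k(d))$ follows immediately by decomposing $S_C=\sum_m q_m S_{C^{(m)}}$.

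For the ``only if'' direction I would take a finite representation $\delta_d\otimes S_C=\sum_l p_l R_l(\cdot)R_l^*$ with each $p_l>0$ and $R_l\in\U(dk)$, and feed in the inputs $I_d\otimes E_{i,j}$. The block formula gives $(\delta_d\otimes S_C)(I_d\otimes E_{i,j})=c_{i,j}\,I_d\otimes E_{i,j}$; writing $R_l=\sum_{s,t}(R_l)_{s,t}\otimes E_{s,t}$ and comparing the $(s,u)$ blocks of both sides yields $\sum_l p_l (R_l)_{s,i}(R_l)_{u,j}^*=c_{i,j}\delta_{s,i}\delta_{u,j}I_d$. Specialising to $i=j$ and $s=u\ne i$ makes the left-hand side a sum of positive semidefinite terms equal to $0$, forcing $(R_l)_{s,i}=0$ for all $l$; hence every $R_l$ is block diagonal, and its unitarity forces each diagonal block $V_l^{(i)}:=(R_l)_{i,i}$ to lie in $\U(d)$. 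Reading the same identity at $s=i,u=j$ and applying $\tr_d$ gives $c_{i,j}=\sum_l p_l\tr_d(V_l^{(i)}(V_l^{(j)})^*)$; setting $W_l^{(i)}=(V_l^{(i)})^*$ we obtain $c_{i,j}=\sum_l p_l\tr_d((W_l^{(i)})^*W_l^{(j)})$, exhibiting $C$ as a convex combination of elements of $\cl{F}_k(d)$.

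I expect the main obstacle to be the ``if'' direction: the crux is to guess the correct unitary family $R_{a,b}$ and to recognise that a second, independent depolarising twirl is precisely what collapses the operator $U_i^*U_j$ sitting in the $(i,j)$ block into the scalar $\tr_d(U_i^*U_j)$ demanded by $S_C$. Once this two-twirl construction is found, both directions reduce to bookkeeping with block multiplication together with the elementary fact that $MM^*=0$ forces $M=0$.
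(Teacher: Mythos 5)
Your proposal is correct and takes essentially the same approach as the paper: the ``only if'' direction (testing on $I_d\otimes E_{i,j}$, forcing block-diagonality from the vanishing of a sum of positive semidefinite terms, then applying $\tr_d$) is identical, and your two-twirl construction in the ``if'' direction is exactly the paper's single average over the block-diagonal family $\bigoplus_{i=1}^k W_{l'}U_iW_l$, just written as a composition of the two Weyl--Heisenberg averages. No gaps.
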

\begin{proof}
  Suppose  $S_C\colon M_k\to M_k$ is such that $\delta_d\otimes S_C$ is mixed unitary. Then for $A=(A_{i,j})=\sum_{i,j=1}^k A_{i,j}\otimes E_{i,j}\in M_d\otimes M_k$, we have
  \[
    \delta_d\otimes S_C(A) = (c_{i,j}\tr_d(A_{i,j})I_d) = \sum_{l=1}^M p_l V_l A V_l^*
  \]
  for some unitaries $V_l\in \U(dk)$, some probability distribution $p_1,\dots,p_M>0$ and some $M\in \bN$.
  For $A=I_d\otimes E_{i,i}$, we have
  \[ \delta_d\otimes S_C(I_d\otimes E_{i,i})=I_d\otimes E_{i,i} = \sum_{l=1}^M p_l V_l(I_d\otimes E_{i,i}) V_l^*.\]
  Write $V_l=(V_{l,i,j})_{i,j=1}^k=\sum_{i,j=1}^k V_{l,i,j}\otimes E_{i,j}\in M_d\otimes M_k$ where each $V_{l,i,j}$ is in $M_d$. The right hand side becomes
  \[\sum_{l=1}^M p_l\sum_{s,t=1}^k V_{l,s,i}V_{l,t,i}^*\otimes E_{s,t}.\]
  For $s=t\ne i$, we obtain
  \[ 0=\sum_{l=1}^M p_l V_{l,s,i}V_{l,s,i}^*\] and each term in the latter sum is positive semidefinite. Hence $V_{l,s,i}=0$
  whenever $s\ne i$, and $V_{l}=\bigoplus_{i=1}^k V_{l,i,i}$. In
  particular, since $V_{l} \in \U(dk)$, each $V_{l,i,i}$ is in $\U(d)$. On the other hand, taking
  $A=I_d\otimes E_{i,j}$ where $i\ne j$, we obtain
  \begin{eqnarray*}
    \delta_d\otimes S_C(I_d\otimes E_{i,j})&=&c_{i,j} I_d\otimes E_{i,j} = \sum_{l=1}^M p_l\sum_{s,t=1}^k \left(V_{l,s,i}V_{l,t,j}^*\otimes E_{s,t}\right)\\
    &=& \left(\sum_{l=1}^M p_lV_{l,i,i}V_{l,j,j}^*\right)\otimes E_{i,j},
  \end{eqnarray*}
  hence
  \[ c_{i,j}=\tr_d(c_{i,j} I_d) = \sum_{l=1}^M p_l \tr_d
    (V_{l,i,i}V_{l,j,j}^*)\] and
  \[ C=(c_{i,j}) = \sum_{l=1}^M p_l \left( \tr_d(U_{l,i}U_{l,j}^*)\right)\]
  where $U_{l,i}=V_{l,i,i}\in \U(d)$. Therefore,  $ C\in \conv({\F}_k(d))$, 
  as claimed.

  \newcommand{\Wt}{\widetilde W}
  Consider the converse. Since the set of mixed unitary maps is convex, so is the set of maps $\Phi:M_k \to M_k$ such that $\delta_d \otimes \Phi$ is mixed unitary, for a fixed~$d$. Thus, it suffices to establish the converse in the case where $C=(c_{i,j})$ is in $\F_k(d)$. Then $C$ is of the form $C=(\tr_d (U_iU_j^*))_{i,j=1}^k $ for some $U_i\in \U(d)$, where $\;1\le i\le k$. Let $\{W_l\}_{l=1}^{d^2}$ be an enumeration of the Weyl operators in $\U(d)$. Define $\Wt_{l,l'} = \bigoplus_{i=1}^k W_{l'}U_iW_{l}\in \U(dk)$. By Equation~(\ref{eq:delta-MU}),
  for $A=(A_{i,j})\in M_{k}(M_d)$ we have
  \begin{eqnarray*}
    d^{-4}\sum_{l,l'=1}^{d^2} \Wt_{l,l'} (A_{i,j}) \Wt_{l,l'}^* &=&d^{-4}\sum_{l,l'=1}^{d^2}(W_{l'} U_i W_l A_{i,j} W_{l}^* U_j^* W_{l'}^*)\\
                                                &=&d^{-2}\sum_{l'=1}^{d^2} ( W_{l'} U_i \tr_d(A_{i,j})U_j^* W_{l'}^*)\\
                                                &=&(\tr_d(A_{i,j})\tr_d(U_iU_j^*)I_d)\\
                                                &=&(\tr_d(A_{i,j})c_{i,j}I_d)\\
                                                &=& (\delta_d\otimes S_C)(A)
  \end{eqnarray*}
  which shows that the map $\delta_d\otimes S_C$ is indeed mixed unitary. 
\end{proof}
\begin{remark}
We recover Theorem~\ref{thm:per} by taking $d=1$.
\end{remark}
 
\begin{remark} In~\cite{hm15}, a quantum channel $T:M_k \to M_k$ is called \emph{factorisable of degree d} if and only if $\delta_d \otimes T$ is a mixed unitary. In~\cite[Proposition 3.4]{hm15}, they prove that a channel is factorisable of degree~$d$ if and only if it has an \emph{exact factorisation through $M_d \otimes M_k \otimes L^{\infty}[0,1]$.} By unraveling what this latter property means in the case of a Schur product map and using the fact that the the set of mixed unitary channels is closed, one obtains Theorem~\ref{thm:gen}.
\end{remark}

\begin{remark} Musat and R\o rdam~\cite{mr} prove the remarkable result that  $\bigcup_d \conv(\cl F_k(d))$ is not closed for any $k \ge 11$.
\end{remark}

\section{Asymptotically mixed unitary Schur multipliers}
We write $\MU(dk)$ or $\MU(M_d\otimes M_k)$ for the set of mixed
unitary maps on $M_{dk}=M_d\otimes M_k$. 
Note that mixed unitary maps are closed under
several natural operations, including taking convex combinations,
tensor products and composition. Given an arbitrary map $\Phi: M_{dk} \to M_{dk}$ we set
\[\dist(\Phi, \MU(dk)) = \inf \{ \| \Phi - \Psi \| : \Psi \in \MU(dk) \}\]and\[\dcb( \Phi, \MU(dk) ) = \inf \{ \| \Phi - \Psi \|_{cb} : \Psi \in \MU(dk) \}\] where $\| \cdot \|_{cb}$ denotes the completely bounded norm~\cite{Pa} of a map. For $d$ and $k$ fixed these distances are comparable, since for any map $\Psi$ on $M_n$ one has~\cite{Pa} $\|\Psi \| \le \|\psi \|_{cb} \le n \|\Psi \|$. 
One goal of this section is to give a matrix theoretic proof of a result of~\cite{hm15} that shows that $C$ is in the closure of $ \bigcup_d \conv( \cl F_k(d))$ if and only if $\dcb( \delta_d \otimes S_C, \MU(dk)) \to 0$ as $d\rightarrow \infty$. In fact, we will prove a somewhat stronger result, namely that  $C$ is in the closure of $ \bigcup_d \conv( \cl F_k(d))$ if and only if $\dist( \delta_d \otimes S_C, \MU(dk)) \to 0$ as $d\rightarrow \infty$.

Since $\delta_d\colon M_d\to M_d$ is in $\MU(M_d)$, the idempotent map
$\Delta_{d,k}:=\delta_d\otimes \id_{M_k}$ is in $\MU(M_d\otimes M_k)$,
for any $d,k\in \bN$. 

We now consider the effect of ``$\Delta$-compression''.

\begin{lemma}\label{lem:MU} Let $d,k\in \bN$ and 
 let $\Phi:M_d\otimes M_k\rightarrow M_d\otimes M_k$ be a unital quantum channel, and write $\Delta=\Delta_{d,k}$.
Then $\Delta\circ\Phi\circ\Delta=\delta_d\otimes T$ for some unital quantum channel  $T:M_k\rightarrow M_k$.  Moreover, if $\Phi\in \MU(dk)$, then $\Delta\circ \Phi\circ \Delta\in \MU(dk)$.
\end{lemma}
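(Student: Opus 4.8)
My plan is to exploit the fact that $\Delta=\Delta_{d,k}$ factors through $M_k$. Writing $\iota_d\colon\bC\to M_d$ for the unital embedding $\lambda\mapsto\lambda I_d$, we have $\delta_d=\iota_d\circ\tr_d$, and hence
\[
\Delta=\delta_d\otimes\id_{M_k}=\iota\circ\mathcal{E},
\]
where $\mathcal{E}=\tr_d\otimes\id_{M_k}\colon M_d\otimes M_k\to M_k$ is the normalised partial trace and $\iota=\iota_d\otimes\id_{M_k}\colon M_k\to M_d\otimes M_k$ is given by $\iota(B)=I_d\otimes B$. Setting $T=\mathcal{E}\circ\Phi\circ\iota\colon M_k\to M_k$, this factorisation gives $\Delta\circ\Phi\circ\Delta=\iota\circ T\circ\mathcal{E}$. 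To finish the first claim I would verify on a simple tensor that $(\iota\circ T\circ\mathcal{E})(X\otimes Y)=\tr_d(X)\,I_d\otimes T(Y)=(\delta_d\otimes T)(X\otimes Y)$, and conclude equality on all of $M_d\otimes M_k$ by linearity.

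It then remains to check that $T$ is a unital quantum channel. Complete positivity is immediate, since $T$ is a composition of the completely positive maps $\iota$, $\Phi$ and $\mathcal{E}$. For unitality I would use that $\Phi$ is unital together with $\iota(I_k)=I_{dk}$ and $\mathcal{E}(I_{dk})=I_k$ to get $T(I_k)=\mathcal{E}(\Phi(I_{dk}))=\mathcal{E}(I_{dk})=I_k$. The one point requiring care is trace preservation, precisely because $\mathcal{E}$ is built from the \emph{normalised} trace: here $\Tr(\mathcal{E}(C))=\tfrac1d\Tr(C)$ for $C\in M_d\otimes M_k$, while $\Tr(\iota(B))=d\,\Tr(B)$ for $B\in M_k$, so the two normalisation factors cancel, and since $\Phi$ is trace-preserving one gets $\Tr(T(B))=\tfrac1d\Tr(\Phi(\iota(B)))=\tfrac1d\Tr(\iota(B))=\Tr(B)$.

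For the second claim I would simply combine facts already recorded above. The map $\Delta=\Delta_{d,k}$ lies in $\MU(dk)$, and $\Phi\in\MU(dk)$ by hypothesis; since $\MU(dk)$ is closed under composition, the threefold composite $\Delta\circ\Phi\circ\Delta$ lies in $\MU(dk)$ as well. Concretely, expanding $\Phi$ and $\Delta$ as convex combinations of maps of the form $X\mapsto UXU^*$ and multiplying out exhibits $\Delta\circ\Phi\circ\Delta$ as another such convex combination, the relevant unitaries being the products of those appearing in the three factors.

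The argument is essentially bookkeeping rather than involving a genuine obstacle: the only delicate step is tracking the normalisation of the trace in the trace-preservation check for $T$, and the mixed-unitary assertion is immediate from closure under composition once one recalls that $\Delta$ is itself mixed unitary.
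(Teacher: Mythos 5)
Your proof is correct and takes essentially the same approach as the paper: both exploit that $\Delta$ factors through $M_k$ as $\iota\circ\mathcal{E}$, and your $T=\mathcal{E}\circ\Phi\circ\iota$ coincides with the paper's $T=(\tr_d\otimes\id_{M_k})\circ\Psi(I_d\otimes\,\cdot\,)$. If anything, your explicit check that the normalisation factors of $\mathcal{E}$ and $\iota$ cancel in the trace-preservation argument is slightly more careful than the paper's, which simply asserts that the factors of $T$ are unital quantum channels.
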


\begin{proof}
  Let $\Psi=\Delta\circ \Phi\circ \Delta$, which is a unital quantum channel on
  $M_d\otimes M_k$. Every element of the range of $\Delta$ is of the form $I_d\otimes Y$ for some $Y\in M_k$. Since $\Psi=\Delta\circ \Psi$, it follows that there exist linear maps $\Psi_i\colon M_k\to M_k$ for $1\le i\le d$ so that
  \[ \Psi(E_{i,i}\otimes B)=I_d\otimes \Psi_i(B),\quad B\in M_k.\]
  Define
  \[ T=\sum_{i=1}^d \Psi_i.\]
  Then for $A\in M_d$ and $B\in M_k$, since $\Psi=\Psi\circ \Delta$, we have
  \begin{eqnarray*}
    \Psi(A\otimes B)&=& \Psi(\Delta(A\otimes B))\\
                    &=& \tr_d(A)\Psi(I_d\otimes B)\\
                    &=& \tr_d(A)\sum_{i=1}^d \Psi(E_{i,i}\otimes B)\\
                    &=& \tr_d(A)\sum_{i=1}^dI_d\otimes \Psi_i(B)\\
                    &=& \tr_d(A)I_d\otimes \sum_{i=1}^d\Psi_i(B)\\
                    &=& \delta_d(A)\otimes T(B),
  \end{eqnarray*}
  so $\Psi=\delta_d\otimes T$. This implies that $T(B)=(\tr_d\otimes \id_{M_k})\circ \Psi(I_d\otimes B)$; since the maps $\tr_d\otimes \id_{M_k}$ and $\Psi$ are both unital quantum channels, as is the embedding $B\mapsto I_d\otimes B$, so is~$T$.
  
 The final assertion follows immediately, since the set of mixed unitary maps is closed under composition.
\end{proof}
Given linear maps $\phi: M_n \to M_n$ and $\psi: M_k \to M_k$ we have a map $\psi \otimes \phi: M_k \otimes M_n \to M_k \otimes M_n$, and $\| \psi \otimes \phi \| \ge \|\psi \| \cdot \|\phi\|$. Generally, this inequality can be strict. For example, taking $\phi$ to be the identity map and computing the supremum over all $n$ is how one obtains the completely bounded norm of $\psi$.
However, for completely bounded maps we have that~\cite{Pa}  \[\|\phi \otimes \psi \|_{cb} = \|\phi \|_{cb} \cdot \|\psi \|_{cb}.\] For a unital completely positive map $\phi$, we have that $\|\phi\|_{cb} = 1$.  From these facts, it follows that when $\phi$ is a unital completely positive map and $\psi$ is completely bounded, we have  $\|\phi \otimes \psi \|_{cb} = \|\psi\|_{cb}$.

Our first result shows that the above inequality becomes an equality for the completely depolarizing chanel $\delta_d$.

\begin{proposition}\label{prop:depoleq} Let $d,k \in \bN$ and let $R: M_k \to M_k$ be a linear map. Then
$\| \delta_d \otimes R\|= \|R\|$.
\end{proposition}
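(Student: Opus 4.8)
The inequality $\|\delta_d\otimes R\|\ge \|R\|$ is the easy half, and can be had directly: since $\delta_d(I_d)=I_d$, for any $Y\in M_k$ we have $(\delta_d\otimes R)(I_d\otimes Y)=I_d\otimes R(Y)$, and ampliation by $I_d$ preserves the operator norm, so evaluating on $I_d\otimes Y$ with $\|Y\|\le 1$ already gives $\|\delta_d\otimes R\|\ge \sup_{\|Y\|\le1}\|R(Y)\|=\|R\|$. (Equivalently, this is the general inequality $\|\psi\otimes\phi\|\ge\|\psi\|\cdot\|\phi\|$ recalled above, with $\psi=\delta_d$ and $\|\delta_d\|=1$.) So the content of the proposition is the reverse estimate $\|\delta_d\otimes R\|\le\|R\|$.

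The plan is to exploit the fact that $\delta_d$ annihilates all off-diagonal information, so that $\delta_d\otimes R$ factors through a single copy of $M_k$. Writing a general $X\in M_d\otimes M_k$ in block form $X=\sum_{i,j=1}^d E_{i,j}\otimes X_{i,j}$ with $X_{i,j}\in M_k$, and using $\delta_d(E_{i,j})=\tr_d(E_{i,j})\,I_d$, which equals $\tfrac1d I_d$ when $i=j$ and $0$ otherwise, I would compute
\[ (\delta_d\otimes R)(X)=I_d\otimes R\Bigl(\tfrac1d\sum_{i=1}^d X_{i,i}\Bigr)=I_d\otimes R\bigl((\tr_d\otimes\id_{M_k})(X)\bigr). \]
Thus $\delta_d\otimes R$ is the ampliation-by-$I_d$ of $R$ precomposed with the normalised partial trace $\tr_d\otimes\id_{M_k}$.

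From this factorization the bound is a two-step estimate. First, $\tr_d\otimes\id_{M_k}$ is a unital completely positive map, being a tensor product of such, and is therefore contractive for the operator norm, so $\|(\tr_d\otimes\id_{M_k})(X)\|\le\|X\|$. Second, the map $Y\mapsto I_d\otimes Y$ is isometric for the operator norm, so $\|I_d\otimes R(Y)\|=\|R(Y)\|$. Combining these, $\|(\delta_d\otimes R)(X)\|=\|R((\tr_d\otimes\id_{M_k})(X))\|\le\|R\|\cdot\|(\tr_d\otimes\id_{M_k})(X)\|\le\|R\|\cdot\|X\|$, and taking the supremum over $\|X\|\le1$ yields $\|\delta_d\otimes R\|\le\|R\|$.

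I do not anticipate a genuine obstacle: once the factorization through a single copy of $M_k$ is recognised, everything reduces to two standard facts, namely contractivity of unital positive maps and norm-invariance under ampliation. The only point requiring a moment of care is the block computation verifying that the completely depolarising channel collapses $X$ to $I_d$ tensored with its normalised partial trace $(\tr_d\otimes\id_{M_k})(X)$.
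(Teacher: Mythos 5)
Your proof is correct and follows essentially the same route as the paper: both rest on the block computation showing that $\delta_d\otimes R$ only sees the normalised diagonal average $\tfrac1d\sum_i X_{i,i}$, the paper bounding $\bigl\|\tfrac1d\sum_i I_d\otimes R(X_{i,i})\bigr\|$ by $\max_i\|R(X_{i,i})\|\le\|R\|$ while you package the same estimate as contractivity of $\tr_d\otimes\id_{M_k}$ followed by an isometric ampliation. Your explicit treatment of the lower bound $\|\delta_d\otimes R\|\ge\|R\|$ is a welcome addition, as the paper leaves it to the general inequality recalled just before the proposition.
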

\begin{proof} Let $X_{i,j} \in M_k, \, 1 \le i,j \le d$ with $\| \sum_{i,j=1}^d E_{i,j} \otimes X_{i,j} \|=1$ in $M_d \otimes M_k$. We have that
  \begin{align*}
    \| (\delta_d \otimes R)(\sum_{i,j=1}^d E_{i,j} \otimes X_{i,j} ) \| &= \| \frac{1}{d} \sum_{i=1}^d I_d \otimes R(X_{i,i}) \| \\&\le \max \{ \|R(X_{i,i}) \| : 1 \le i \le d \} \le \|R\|,
  \end{align*}
  since $\|X_{i,i}\| \le 1$, and the result follows.
\end{proof}

\begin{proposition}\label{prop:dist to MU} Let $d \in \bb N$ and let $R: M_k \to M_k$ be a quantum channel.  Then there are quantum channels $T_i: M_k \to M_k$, $i=1,2$ such that $\delta_d \otimes T_i$, $i=1,2$ are mixed unitary with
\[ \|R - T_1 \|_{cb} = \dcb( \delta_d \otimes R, \MU(dk)),\]
and 
\[ \|R - T_2 \| = \dist( \delta_d \otimes R, \MU(dk)).\]
\end{proposition}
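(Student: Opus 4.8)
The plan is to combine the compactness of $\MU(dk)$ with the $\Delta$-compression machinery of Lemma~\ref{lem:MU}. First I would observe that $\MU(dk)$ is a compact subset of the finite-dimensional space of linear maps on $M_{dk}$, being the convex hull of the compact set $\{X\mapsto UXU^*:U\in\U(dk)\}$. Hence the infima defining $\dcb(\delta_d\otimes R,\MU(dk))$ and $\dist(\delta_d\otimes R,\MU(dk))$ are both attained, say by $\Psi_1$ and $\Psi_2$ in $\MU(dk)$ respectively.

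The key structural observation is that $\delta_d\otimes R$ is fixed by conjugation with $\Delta=\Delta_{d,k}$: since $\delta_d$ is idempotent, $\Delta\circ(\delta_d\otimes R)\circ\Delta=\delta_d\otimes R$. I would then compress the minimizers by setting $\Psi_i'=\Delta\circ\Psi_i\circ\Delta$. Since each $\Psi_i$ is mixed unitary and hence a unital quantum channel, Lemma~\ref{lem:MU} applies and shows that each $\Psi_i'$ lies in $\MU(dk)$ and has the form $\Psi_i'=\delta_d\otimes T_i$ for some unital quantum channel $T_i:M_k\to M_k$; in particular each $T_i$ is a quantum channel, as required.

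Next I would check that this compression does not increase the distance. Writing $\delta_d\otimes R-\Psi_i'=\Delta\circ(\delta_d\otimes R-\Psi_i)\circ\Delta$ and using that $\Delta$ is a unital completely positive map, so that $\|\Delta\|=\|\Delta\|_{cb}=1$, submultiplicativity of both norms under composition gives $\|\delta_d\otimes R-\Psi_i'\|\le\|\delta_d\otimes R-\Psi_i\|$ and likewise for the completely bounded norm. Since $\Psi_i$ already attains the relevant infimum and $\Psi_i'\in\MU(dk)$, the compressed maps $\Psi_i'$ attain it as well.

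Finally I would transfer these equalities down to the level of $M_k$. For the operator norm, Proposition~\ref{prop:depoleq} gives $\|\delta_d\otimes(R-T_2)\|=\|R-T_2\|$, whence $\|R-T_2\|=\|\delta_d\otimes R-\Psi_2'\|=\dist(\delta_d\otimes R,\MU(dk))$. For the completely bounded norm, the identity $\|\delta_d\otimes(R-T_1)\|_{cb}=\|R-T_1\|_{cb}$, valid because $\delta_d$ is unital completely positive, yields $\|R-T_1\|_{cb}=\dcb(\delta_d\otimes R,\MU(dk))$, completing the proof. I expect the only genuinely delicate point to be the verification that $\Delta$-compression both preserves membership in $\MU(dk)$ and is norm-nonincreasing; but the former is exactly Lemma~\ref{lem:MU} and the latter reduces to $\|\Delta\|=\|\Delta\|_{cb}=1$, so no substantial new obstacle arises.
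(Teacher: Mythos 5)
Your proposal is correct and follows essentially the same route as the paper: choose a minimiser in the compact set $\MU(dk)$, compress it by $\Delta=\Delta_{d,k}$ to obtain $\delta_d\otimes T_i$ via Lemma~\ref{lem:MU}, use the contractivity of $\Delta$ together with $\Delta\circ(\delta_d\otimes R)\circ\Delta=\delta_d\otimes R$ to see the distance is not increased, and then remove the $\delta_d$ factor using Proposition~\ref{prop:depoleq} for the operator norm and the multiplicativity of the cb-norm under tensoring with a unital completely positive map. No gaps.
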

\begin{proof} Since the set of mixed unitaries is a compact set, we may choose a mixed unitary $\Phi_1: M_d \otimes M_k \to M_d \otimes M_k$ such that 
\[\|\delta_d \otimes R - \Phi_1 \|_{cb} = \dcb( \delta_d \otimes R, \MU(dk)).\]
By the above,  $\Delta \circ \Phi_1 \circ \Delta = \delta_d \otimes T_1$ for some unital quantum channel $T_1 \colon M_k\to M_k$.
Then
\begin{eqnarray*}
\|R - T_1 \|_{cb} &=& \|\delta_d \otimes R - \delta_d \otimes T_1 \|_{cb} = \| \Delta \circ [ \delta_d \otimes R - \Phi_1] \circ \Delta \|_{cb} \\ &\le & \| \delta_d \otimes R - \Phi_1 \|_{cb} = \dcb( \delta_d \otimes R, \MU(dk)).\end{eqnarray*}
However,
\[ \|R - T_1 \|_{cb} = \|\delta_d \otimes R - \delta_d \otimes T_1\|_{cb} \ge \dcb( \delta_d \otimes R, \MU(dk)),\]
and so the first result follows.

The proof of the second result is similar. One first picks $\Phi_2$ that attains the norm distance and precedes as in the first case while using Proposition~\ref{prop:depoleq}. 
\end{proof}
Our main theorem is an analogue of Proposition~\ref{prop:dist to MU} in the case that $R$ is a Schur product map, except that we would also like to choose $T$ to be a Schur product map. In this case we do not get an equality, but we are able to get a bound.

Working towards our main theorem requires a certain averaging by unitary conjugation. Typically, averaging by unitary conjugation is called \emph{twirling}; in the quantum information theory literature, it is said that one is applying a \emph{twirling operation}. The twirl of a quantum channel is again a quantum channel. Here we do a somewhat different operation, which preserves complete positivity, but does not generally preserve the property of being a quantum channel; this operation was previously used in~\cite[Section~V]{ydx}. We use $\{e_j\}$ to denote the standard basis  of vectors of a given dimension. 

\newcommand{\dint}{\int\!\!\int}
\begin{lemma}\label{lem:Daverage}
  Let $T\colon M_k\to M_k$ be completely positive and let $\gamma\colon M_k\to M_k$ be the ``$\D$-biaverage'' of $T$, given by
  \[ \gamma(X)=\dint
    D_1^*T(D_1XD_2)D_2^*\, dD_1\,dD_2,\] where the integrals are taken
  over the group of diagonal unitary matrices in $M_k$ with respect to Lebesgue
  measure on $\bT^k$. Then $\gamma=S_B$ for the positive semidefinite matrix $B=V^*C_TV$ where $C_T$ is the Choi matrix of $T$ given by  $C_T=(T(E_{i,j}))_{i,j=1}^k\in M_{k}\otimes M_k$ and $V:\bC^k\rightarrow \bC^k\otimes \bC^k$ is the isometry $e_j\mapsto e_j\otimes e_j$. In particular, $\gamma$ is completely positive.
\end{lemma}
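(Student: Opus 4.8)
The plan is to reduce everything to a computation on the matrix units $E_{p,q}$, using that $\gamma$ is linear in $X$. Writing a diagonal unitary as $D_1=\sum_a \lambda_a E_{a,a}$ and $D_2=\sum_b \mu_b E_{b,b}$ with $\lambda_a,\mu_b\in\bT$, I would first note that $D_1 E_{p,q} D_2=\lambda_p\mu_q E_{p,q}$, so that $D_1^* T(D_1 E_{p,q} D_2) D_2^*$ multiplies the $(r,s)$-entry of $T(E_{p,q})$ by the scalar $\lambda_p\overline{\lambda_r}\,\mu_q\overline{\mu_s}$. The two integrations run over independent copies of $\bT^k$ with normalised Lebesgue measure, and the elementary orthogonality relation $\int_{\bT} z^n\,dz=\delta_{n,0}$ gives $\int \lambda_p\overline{\lambda_r}\,dD_1=\delta_{p,r}$ and $\int \mu_q\overline{\mu_s}\,dD_2=\delta_{q,s}$.

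Carrying out these integrals entrywise (justified because we are integrating a bounded continuous matrix-valued function over a compact group, so the integral may be taken coordinatewise) annihilates every off-diagonal contribution and retains only the $(p,q)$-entry. Thus I expect to obtain $\gamma(E_{p,q})=\bigl(T(E_{p,q})\bigr)_{p,q}\,E_{p,q}$, which is precisely the statement that $\gamma=S_B$ for the matrix $B$ with entries $B_{p,q}=\bigl(T(E_{p,q})\bigr)_{p,q}$. I would then identify $B$ with $V^*C_TV$: writing $C_T=\sum_{i,j}E_{i,j}\otimes T(E_{i,j})$ and $Ve_q=e_q\otimes e_q$, a direct inner-product computation yields $\langle e_p,\,V^*C_TV\,e_q\rangle=\langle e_p\otimes e_p,\,C_T(e_q\otimes e_q)\rangle=\bigl(T(E_{p,q})\bigr)_{p,q}$, matching $B_{p,q}$.

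Finally, the positivity claims are immediate once this identification is in hand: complete positivity of $T$ makes its Choi matrix $C_T$ positive semidefinite, so the compression $B=V^*C_TV$ by the isometry $V$ is positive semidefinite as well, and the Schur product theorem then gives that $\gamma=S_B$ is completely positive. The argument is essentially bookkeeping; the only point requiring any care is handling the two independent torus integrations correctly via the character orthogonality relations, together with the routine justification that the matrix-valued integral is computed entry by entry. I do not anticipate a serious obstacle.
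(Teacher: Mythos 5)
Your proposal is correct and follows essentially the same route as the paper: reduce to matrix units, use the orthogonality relations on $\bT^k$ to kill all but the $(p,q)$-entry of $T(E_{p,q})$, identify $B$ with the compression $V^*C_TV$ of the Choi matrix, and conclude complete positivity from positivity of $B$. No gaps.
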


\begin{proof}
  We have
  \begin{eqnarray*}
    \gamma(E_{i,j})&=& \dint D_1^*T(D_1 E_{i,j} D_2)D_2^*\,dD_1\,dD_2\\
                   &=& \dint D_1^*d_{1,i} T( E_{i,j} )d_{2,j}D_2^*\,dD_1\,dD_2\\
                   &=& \left(\int D_1^*d_{1,i}\,dD_1\right) T( E_{i,j} ) \left(\int d_{2,j}D_2^*\,dD_2\right)\\
                   &=& E_{i,i}T( E_{i,j} ) E_{j,j}\\
                   &=& T(E_{i,j})_{i,j} E_{i,j}.
  \end{eqnarray*}
  So $\gamma$ is Schur multiplication by the matrix $B=(b_{i,j})$,
  where $b_{i,j}=T(E_{i,j})_{i,j}$. Since $T$ is completely positive,
  its Choi matrix $C_T=(T(E_{i,j}))_{i,j=1}^k\in M_{k}\otimes M_k$ is
  positive, and $B$ is the compression of $C_T$ to the subspace
  spanned by $\{e_i\otimes e_i\colon 1\le i\le k\}$. Hence, $B=V^*C_TV$ is
  positive semidefinite. Consequently, $\gamma=S_B$ is completely positive.
\end{proof}
\begin{remark} The map $\gamma$ of Lemma~\ref{lem:Daverage} can also be obtained by averaging over the $2^k$ diagonal matrices of $\pm 1$'s.
\end{remark}

\begin{remark} The same proof shows that, if  $\cl G \subseteq \cl U(k)$ is any compact subgroup, $\phi: M_k \to M_k$ is completely positive, and we set
\[ \gamma(X) = \int \int U_1^*\phi(U_1XU_2)U_2^* dU_1 dU_2, \]
where $dU$ denotes Haar measure on $\cl G$, then $\gamma$ is completely positive and \emph{$\cl G$-covariant}, i.e., $\gamma(U_1XU_2) = U_1 \gamma(X)U_2$ for any $U_1, U_2 \in \cl G$.   We shall refer to $\gamma$ as the \emph{$\cl G$-biaverage} of $\phi$. Note that the $\cl G$-biaverage of a quantum channel need not be a quantum channel, but it can be shown to be trace non-increasing for positive elements. Similarly, if $\phi$ is unital, then $0 \le \gamma(I_k) \le I_k$.  Both of these latter inequalities follow by showing that if $\{ A_i \}$ is a set of Choi-Kraus operators for $\phi$, then $\{ \bb E(A_i) \}$ is a set of Choi-Kraus operators for $\gamma$, where $\bb E:M_k \to M_k$ is the conditional expectation onto the commutant of $\cl G$, and using the Cauchy-Schwarz inequality for completely positive maps.
\end{remark}

In finite dimensions, we have the following unitary
dilation at our disposal. 
\begin{lemma}\label{lem:udilation}
  If $X\in M_d$ with $\|X\|\le 1$, then there exist $A,B \in M_d$ such that the $(2d) \times (2d)$ matrix
  \[
    W=\left[\begin{array}{cc}
      X&A\\B&X
    \end{array}\right]
  \] is unitary.
\end{lemma}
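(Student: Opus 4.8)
The plan is to reduce the problem to a scalar computation via the singular value decomposition of $X$. The natural first guess—to imitate the Halmos dilation—does not apply directly, since that construction places $-X^*$ rather than $X$ in the second diagonal block. More fundamentally, unitarity of $W$ forces $A^*A=B^*B=I_d-X^*X$ together with $AA^*=BB^*=I_d-XX^*$, so one cannot simply set $A$ and $B$ equal to the positive square root $(I_d-X^*X)^{1/2}$: that choice gives $BB^*=I_d-X^*X$, whereas we need $BB^*=I_d-XX^*$, and the left and right defect operators differ unless $X$ is normal. The singular value decomposition is exactly the device that reconciles this mismatch.

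First I would write $X=U\Sigma V^*$ with $U,V\in\U(d)$ and $\Sigma=\diag(\sigma_1,\dots,\sigma_d)$, where $0\le\sigma_i\le 1$ because $\|X\|\le 1$ forces every singular value to be at most $1$. Next I would dispose of the diagonal case: for each scalar $\sigma\in[0,1]$ the $2\times2$ rotation
\[
\begin{pmatrix}\sigma & -\sqrt{1-\sigma^2}\\[2pt] \sqrt{1-\sigma^2}&\sigma\end{pmatrix}
\]
is orthogonal, hence unitary. Since $\Sigma$ and $(I_d-\Sigma^2)^{1/2}$ are diagonal and therefore commute, a direct block multiplication shows that
\[
W_0=\left[\begin{array}{cc}\Sigma & -(I_d-\Sigma^2)^{1/2}\\ (I_d-\Sigma^2)^{1/2}&\Sigma\end{array}\right]
\]
satisfies $W_0^*W_0=W_0W_0^*=I_{2d}$, so $W_0$ is unitary. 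Finally I would transfer this back to $X$ by conjugating with block-diagonal unitaries: setting
\[
W=\left[\begin{array}{cc}U&0\\0&U\end{array}\right]W_0\left[\begin{array}{cc}V^*&0\\0&V^*\end{array}\right],
\]
the outer factors are unitary, so $W$ is unitary; both of its diagonal blocks equal $U\Sigma V^*=X$, while its off-diagonal blocks are $\mp\,U(I_d-\Sigma^2)^{1/2}V^*$. Thus $B=U(I_d-\Sigma^2)^{1/2}V^*$ and $A=-B$ exhibit the desired dilation.

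I expect the only real obstacle to be the conceptual one noted at the outset, namely recognising that the same-corner requirement rules out the positive-square-root choice and forces one through the singular value decomposition to balance $I_d-XX^*$ against $I_d-X^*X$; once the factorisation through $W_0$ is set up, every remaining step is a routine verification using that diagonal matrices commute.
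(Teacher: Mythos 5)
Your proof is correct. It takes a somewhat different route from the paper's: you factor $X=U\Sigma V^*$ via the singular value decomposition, build the base unitary
\[
W_0=\left[\begin{array}{cc}\Sigma & -(I_d-\Sigma^2)^{1/2}\\ (I_d-\Sigma^2)^{1/2}&\Sigma\end{array}\right]
\]
from scratch as a commuting-diagonal (equivalently, direct-sum-of-$2\times2$-rotations) computation, and then conjugate by $U\oplus U$ on the left and $V^*\oplus V^*$ on the right. The paper instead starts from the Halmos dilation $\left[\begin{smallmatrix} X& C\\ D& -X^*\end{smallmatrix}\right]$ with $C=\sqrt{I_d-XX^*}$, $D=\sqrt{I_d-X^*X}$, and repairs the unwanted $-X^*$ corner by multiplying with block-diagonal unitaries built from the polar decomposition $X=UP$, using the identity $UX^*U=X$. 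Interestingly, the obstacle you flag at the outset---that the Halmos form puts $-X^*$ rather than $X$ in the lower-right corner---is precisely what the paper's proof overcomes, so your claim that the Halmos construction ``does not apply directly'' is true only in the narrow sense that it needs a correction factor; the paper supplies exactly that factor. The two arguments are of comparable length and difficulty: yours is more self-contained (no appeal to the Halmos dilation as a known fact, and the verification reduces to scalars), while the paper's reuses a standard dilation and a one-line polar-decomposition identity. Both correctly deliver $\|A\|,\|B\|$ controlled by the defect operators, which is what Theorem~\ref{thm:main} subsequently exploits.
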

\begin{proof}
  Define $C=\sqrt{I_d-XX^*}$ and $D=\sqrt{I_d-X^*X}$, and let $X=UP$ be the polar decomposition of~$X$, where, since we work in finite dimensions, we may assume that $U$ is unitary (rather than merely a partial isometry) and $P$ is positive. Halmos' unitary dilation of~$X$ is given by
  \[\left[\begin{array}{cc}
   X&C\\D&-X^*  
  \end{array}\right]\in \U(2d).\]
Thus,
\[ W:=
  \left[\begin{array}{cc}
    I_d&0\\0&-I_d
   \end{array}\right]
  \left[\begin{array}{cc}
    I_d&0\\0&U
 \end{array}\right]
  \left[\begin{array}{cc}
    X&C\\D&-X^*  
   \end{array}\right]
 \left[\begin{array}{cc}
    I_d&0\\0&U
   \end{array}\right]\in \U(2d)\]
since $W$ is a product of unitaries. Since $UX^*U=UPU^*U=UP=X$, we see that $W$ has the desired form.
\end{proof}
\begin{remark} In infinite dimensions, taking $X$ to be the unilateral shift, one sees that there are no operators $A$ and $B$ such that the above operator matrix is a unitary.
\end{remark}


\begin{theorem}\label{thm:main}
 Let $C=(c_{i,j})$ be a $k\times k$ correlation matrix and let $\epsilon>0$.
 If $\dist(\delta_d\otimes S_C,\MU(dk))<\epsilon$, then there is $\widehat C=(\widehat c_{i,j})\in \conv(\F_k(2d))$ with $\|S_C-S_{\widehat{C}}\| <2\epsilon$. In particular, $|c_{i,j}-\widehat c_{i,j}|<2\epsilon$ for all $1 \leq i,j \leq k$.
\end{theorem}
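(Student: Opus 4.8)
The plan is to first replace $S_C$ by a genuine quantum channel that is exactly factorisable of degree $d$, then average it into a Schur multiplier, and finally dilate the resulting contractions into honest unitaries on $\bC^{2d}$. Since $\dist(\delta_d\otimes S_C,\MU(dk))<\epsilon$, Proposition~\ref{prop:dist to MU} supplies a unital quantum channel $T\colon M_k\to M_k$ with $\delta_d\otimes T\in\MU(dk)$ and $\|S_C-T\|<\epsilon$. Writing $\delta_d\otimes T=\sum_l p_l V_l(\cdot)V_l^*$ with $V_l\in\U(dk)$, I record the $d\times d$ diagonal blocks $W_l^{(i)}:=(V_l)_{i,i}\in M_d$ with respect to the $M_k(M_d)$ block structure, each of which is a contraction.

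Next I apply the $\D$-biaverage $\beta$ of Lemma~\ref{lem:Daverage} to $T$. Because $\beta$ is linear, operator-norm contractive, and fixes every Schur multiplier (conjugation by diagonal unitaries leaves $S_C$ invariant), the map $S_B:=\beta(T)$ satisfies $\|S_C-S_B\|=\|\beta(S_C-T)\|\le\|S_C-T\|<\epsilon$; in particular $b_{i,i}>1-\epsilon$, since $\|S_{C-B}\|\ge|c_{i,i}-b_{i,i}|=1-b_{i,i}$. A short computation shows that the analogous biaverage on $M_d\otimes M_k$, taken over conjugations by $I_d\otimes D$, sends $\delta_d\otimes T$ to $\delta_d\otimes S_B$ and simultaneously replaces each $V_l$ by the conditional expectation $E(V_l)=\sum_i W_l^{(i)}\otimes E_{i,i}$ onto the block-diagonal part. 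Thus $\delta_d\otimes S_B=\sum_l p_l E(V_l)(\cdot)E(V_l)^*$, and evaluating both sides at $I_d\otimes E_{i,j}$ gives $\sum_l p_l W_l^{(i)}W_l^{(j)*}=b_{i,j}I_d$, hence $\sum_l p_l\tr_d(W_l^{(i)}W_l^{(j)*})=b_{i,j}$.

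Now I dilate. By Lemma~\ref{lem:udilation}, each contraction $W_l^{(i)}$ admits a unitary dilation $\widehat W_l^{(i)}=\left[\begin{smallmatrix} W_l^{(i)} & A_l^{(i)}\\ B_l^{(i)} & W_l^{(i)}\end{smallmatrix}\right]\in\U(2d)$. For each fixed $l$ the family $\{\widehat W_l^{(i)}\}_{i=1}^k$ determines a matrix in $\F_k(2d)$, so I set $\widehat C:=\big(\sum_l p_l\tr_{2d}(\widehat W_l^{(i)}\widehat W_l^{(j)*})\big)_{i,j}\in\conv(\F_k(2d))$ (which is a genuine correlation matrix, as $\widehat c_{i,i}=1$, and for which $\delta_{2d}\otimes S_{\widehat C}$ is mixed unitary by the block-diagonal construction of Theorem~\ref{thm:gen}). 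Computing the normalised trace of the block product, the two diagonal blocks of $\widehat W_l^{(i)}\widehat W_l^{(j)*}$ each contribute $W_l^{(i)}W_l^{(j)*}$, so $\widehat c_{i,j}-b_{i,j}=\frac{1}{2d}\sum_l p_l\big(\Tr(A_l^{(i)}A_l^{(j)*})+\Tr(B_l^{(i)}B_l^{(j)*})\big)$.

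The key observation, which I expect to carry the whole argument, is that the error matrix $\widehat C-B$ is positive semidefinite: it is a nonnegative combination of Gram matrices of the vectors $\{A_l^{(i)}\}_i$ and $\{B_l^{(i)}\}_i$ in the Hilbert--Schmidt space. Since a positive semidefinite Schur multiplier is completely positive, its norm equals its largest diagonal entry, whence $\|S_{\widehat C}-S_B\|=\max_i(\widehat c_{i,i}-b_{i,i})=\max_i(1-b_{i,i})<\epsilon$. Combining this with $\|S_C-S_B\|<\epsilon$ through the triangle inequality yields $\|S_C-S_{\widehat C}\|<2\epsilon$, and the entrywise bound follows from $\|S_M\|\ge\max_{i,j}|m_{i,j}|$. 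The delicate point is precisely this positivity: a cruder entrywise Cauchy--Schwarz estimate of $\widehat c_{i,j}-b_{i,j}$ would only give $3\epsilon$ after the triangle inequality, whereas recognising $\widehat C-B$ as a Gram matrix pins the Schur-multiplier norm of the error to its diagonal and recovers the sharp constant $2\epsilon$.
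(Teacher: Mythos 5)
Your proposal is correct and follows essentially the same route as the paper's proof: compress to $\delta_d\otimes T$ (you cite Proposition~\ref{prop:dist to MU} where the paper redoes the $\Delta$-compression inline), take the $\D$-biaverage to land on a positive Schur multiplier while averaging the Kraus unitaries to block-diagonal contractions, dilate via Lemma~\ref{lem:udilation} with $X$ in both corners, and use positivity of the Gram-matrix error to pin its Schur norm to the diagonal and get the sharp $2\epsilon$. The only cosmetic differences are that you identify the averaged operators directly as conditional expectations onto the block diagonal (the paper deduces block-diagonality from a positivity argument) and you treat the error as one positive matrix rather than splitting it into the two pieces $Y$ and $Z$.
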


\begin{proof}
  By hypothesis, there is a mixed unitary $\Phi:M_{dk}\rightarrow M_{dk}$ with
  \[\|\delta_d\otimes S_C-\Phi\| <\epsilon.\] By Lemma~\ref{lem:MU},
  the mixed unitary map $\Delta\circ \Phi\circ \Delta$ is of the form
  $\delta_d\otimes T$ for some unital quantum channel $T\colon M_k\to M_k$. Since
  $\Delta\circ (\delta_d\otimes S_C)\circ \Delta=\delta_d\otimes S_C$,
  by Proposition~\ref{prop:depoleq} we have
  \[\|S_C-T\|= \|\delta_d\otimes S_C-\delta_d\otimes T\|=\|\Delta\circ
    (\delta_d\otimes S_C-\Phi)\circ \Delta\| <\epsilon.\]
  Since $\delta_d\otimes T$ is mixed unitary, we may write
  \[ \delta_d\otimes T(X)=\sum_{l=1}^M t_l U_lXU_l^*,\quad X\in M_{dk},\] for some $M\in \bN$ and
  unitaries $U_1,\dots,U_M\in \U(dk)$ and $t_l\ge 0$ satisfying
  $\sum_{l=1}^M t_l=1$. By Lemma~\ref{lem:Daverage}, the $\cl D$-biaverage of $T$ is a positive Schur multiplier, say $S_{\widetilde C}\colon M_k\to M_k$, for some positive semidefinite $\widetilde C\in M_k$. For any $X\in M_k$ with $\|X\|\le 1$, we have
\begin{align*}  \|S_C(X)-S_{\widetilde C}(X)\|&=\left\|\dint
    D_1^*S_C(D_1XD_2)D_2^*-D_1^*T(D_1XD_2)D_2^*\, dD_1\,dD_2\right\|\\
                    &\le\dint\left\| D_1^*\big(S_C(D_1XD_2)-T(D_1XD_2)\big)D_2^*\right\|\, dD_1\,dD_2\\
  &\le \dint \| S_C-T\|\,dD_1\,dD_2\le \|S_C-T\| <\epsilon.
\end{align*}
Hence, $\|S_C-S_{\widetilde C}\|< \epsilon$. This implies that $|c_{i,j}-\widetilde c_{i,j}|<\epsilon$
for every $i,j$. In particular, since $c_{i,i}=1$ for all~$i$, we have
\begin{equation}
  \label{eq:ctildebound}
  |1-\widetilde c_{i,i}|<\epsilon,\quad 1\le i\le k.
\end{equation}

For $X\in M_{dk}$, we have
  \begin{align*}
   \delta_d &\otimes S_{\widetilde C}(X)=\sum_l t_l \dint (I_d\otimes D_1)^* U_l(I_d\otimes D_1)X(I_d\otimes D_2)U_l^* (I_d\otimes D_2)^*\,dD_1\,dD_2\\
                            & =\sum_l t_l \left(\int (I_d\otimes D_1)^* U_l(I_d\otimes D_1)\,dD_1\right)
                              X
                              \left(\int(I_d\otimes D_2)U_l^* (I_d\otimes D_2)^*\,dD_2\right)\\
                            & = \sum_l t_l X_l X X_l^*
  \end{align*}  
  where $X_l=\int (I_d\otimes D)^* U_l(I_d\otimes D)\,dD\in M_{dk}$. Since
  $X_l$ lies in the convex hull of $\U(dk)$, we have $\|X_l\|\le 1$.

To see that each $X_l$ is block diagonal, just as in the proof of Theorem~\ref{thm:gen}, we calculate
\[ \delta_d\otimes S_{\widetilde C}(I_d\otimes E_{i,i})=I_d\otimes \widetilde{c}_{i,i}E_{i,i} = \sum_{l=1}^M t_l X_l(I_d\otimes E_{i,i}) X_l^*.\]
  Write $X_l=(X_{l,i,j})_{i,j=1}^k=\sum_{i,j=1}^k X_{l,i,j}\otimes E_{i,j} \in M_d \otimes M_k$; then we obtain
  \[I_d\otimes \widetilde{c}_{i,i}E_{i,i}=\sum_{l=1}^M t_l\sum_{s,t=1}^k X_{l,s,i}X_{l,t,i}^*\otimes E_{s,t}.\]
  For $s=t\ne i$, we have
  \[ 0=\sum_{l=1}^M t_l X_{l,s,i}X_{l,s,i}^*.\] By positivity, $X_{l,s,i}=0$
  whenever $s\ne i$. Writing $X_{l,i}:=X_{l,i,i}$, it follows that $X_{l}=\bigoplus_{i=1}^k X_{l,i}=\sum_{i=1}^kX_{l,i}\otimes E_{i,i}$. Since $\|X_l\| \le 1$, it follows that $\|X_{l,i}\|\le 1$ for each $i$. Moreover,
  \[ I_d\otimes \widetilde c_{i,j}E_{i,j}=\delta_d\otimes S_{\widetilde C}(I_d\otimes E_{i,j}) = \sum_{l=1}^M t_l X_l(I_d\otimes E_{i,j}) X_l^*=  \sum_{l=1}^M t_l X_{l,i} X_{l,j}^*\otimes E_{i,j},\]
  so
  \[ \widetilde c_{i,j}E_{i,j}=\tr_d \otimes \id(I_d \otimes \widetilde{c}_{i,j} E_{i,j})=\tr_d\otimes \id\left(\sum_{l=1}^M t_lX_{l,i}X_{l,j}^*\otimes E_{i,j}\right).\] Hence,
  \begin{equation}
    \label{eq:ctilde}
    \widetilde c_{i,j}=\sum_{l=1}^M t_l\tr_d(X_{l,i}X_{l,j}^*).
  \end{equation}
  
  Applying Lemma~\ref{lem:udilation} to each $X_{l,i}$, we obtain unitary matrices $W_{l,i}\in \U(2d)$ of the form
  \[ W_{l,i}=
  \left[\begin{array}{cc}
      X_{l,i}&A_{l,i}\\
      B_{l,i}&X_{l,i}
    \end{array}\right],
  \]

for some $A_{l,i},B_{l,i} \in M_d$. Now, consider
    \[ \widehat C:=(\widehat c_{i,j})\in \conv (\F_k(2d))\]
    defined by
    \begin{eqnarray*}
      \nonumber\widehat c_{i,j}
      &:=&\sum_{l=1}^Mt_l\tr_{2d}(W_{l,i}W_{l,j}^*)
      \\\nonumber      
      &=&
        \sum_{l=1}^Mt_l\tr_{2d}\left[\begin{array}{cc}
          X_{l,i}X_{l,j}^*+A_{l,i}A_{l,j}^*&*\\
          *&B_{l,i}B_{l,j}^*+X_{l,i}X_{l,j}^*
        \end{array}\right]
      \\\nonumber
      &=&\sum_{l=1}^Mt_l\tr_{d}( X_{l,i}X_{l,j}^*)+\frac12\left(\sum_{l=1}^Mt_l\left(\tr_{d}\left(A_{l,i}A_{l,j}^*+B_{l,i}B_{l,j}^*\right)\right) \right)
      \\
      &=& \widetilde c_{i,j}+\frac12\left(\sum_{l=1}^Mt_l\left(\tr_{d}\left(A_{l,i}A_{l,j}^*+B_{l,i}B_{l,j}^*\right)\right) \right)
    \end{eqnarray*}
    where the off-diagonal terms denoted by $*$ in the second line may be ignored, as they do not affect the trace. 
    
    Since each $W_{l,i}$ is unitary, we have $X_{l,i}X_{l,i}^*+A_{l,i}A_{l,i}^*=I_d$. By Equation~(\ref{eq:ctilde})  above, we have
    \begin{eqnarray*}
      \sum_{l=1}^Mt_l\tr_d(A_{l,i}A_{l,i}^*)
      &=&\sum_{l=1}^Mt_l\tr_d(I_d-X_{l,i}X_{l,i}^*)\\
      &=&1-\sum_{l=1}^Mt_l\tr_d(X_{l,i}X_{l,i}^*)\\
      &=&1-\tr_{d}(\widetilde{c}_{i,i}I_d)\\
      &=&1-\widetilde{c}_{i,i}.
    \end{eqnarray*}
    In particular, by Equation~(\ref{eq:ctildebound}),
    \[\left| \sum_{l=1}^M t_l \tr_d(A_{l,i}A_{l,i}^*) \right|<\epsilon.\]
    Define $y_{i,j}=\sum_{l=1}^M t_{l} \tr_d(A_{l,i}A_{l,j}^*)$ and $z_{i,j}=\sum_{l=1}^M t_{l} \tr_d(B_{l,i}B_{l,j}^*)$, and set $Y=(y_{i,j})_{i,j=1}^k$ and $Z=(z_{i,j})_{i,j=1}^k$. Then $|y_{i,i}|<\epsilon$ for each $i$. A similar argument shows that $|z_{i,i}|<\epsilon$. We have $\widehat{C}=\widetilde{C}+\frac{1}{2} (Y+Z)$, so
    \[S_{\widehat{C}}-S_{\widetilde{C}}=\frac{1}{2} (S_Y+S_Z).\]
    We will show that $\|S_Y\| <\epsilon$; the argument for $S_Z$ is similar.  For each $1 \leq l \leq M$, the matrix $(A_{l,i} A_{l,j}^*)_{i,j=1}^k \in M_k \otimes M_d$ is positive. Then $(\tr_d(A_{l,i}A_{l,j}^*))=\id_k \otimes \tr_d (A_{l,i}A_{l,j}^*)_{i,j}$ is positive as well. Taking convex combinations, we see that $Y$ is positive in $M_k$. In particular, $0 \leq y_{i,i}<\epsilon$ for each $i$. But since $Y$ is positive, the Schur multiplier map $S_Y$ is completely positive, so that \[\|S_Y\| =\|S_Y(I_k)\|=\max \{ y_{i,i}: 1 \leq i \leq k \}<\epsilon.\]
    Similarly, $\|S_Z\| <\epsilon$, so that
    \[\| S_{\widehat{C}}-S_{\widetilde{C}} \| \leq \frac{1}{2} (\|S_Y\|+\|S_Z\|)<\frac{1}{2} (\epsilon+\epsilon)=\epsilon.\]
    Finally, since $\|S_C - S_{\widetilde{C}}\| <\epsilon$, it follows that
    \[\|S_C-S_{\widehat{C}}\| \leq \|S_C-S_{\widetilde{C}}\| +\|S_{\widetilde{C}}-S_{\widehat{C}}\| <2\epsilon.\]
    Hence,
    \begin{eqnarray*}
    |c_{i,j}-\widehat{c}_{i,j}| = \|(S_C-S_{\widehat{C}})(E_{i,j})\| \leq \|S_C-S_{\widehat{C}}\| <2\epsilon 
    \end{eqnarray*}
    as desired.
\end{proof}
\begin{theorem}\label{thm:AB}
  Let $C\in M_k$ be a correlation matrix. The following are equivalent:
  \begin{enumerate}
  \item $C\in \F_k$;
  \item  $\inf_{d} \dcb(\delta_d\otimes S_C,\MU(dk))=0$;
  \item $\inf_d \dist( \delta_d \otimes S_C, \MU(dk)) =0$.
\end{enumerate}
\end{theorem}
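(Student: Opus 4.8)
The plan is to establish the cyclic chain of implications $(1)\Rightarrow(2)\Rightarrow(3)\Rightarrow(1)$. The implication $(2)\Rightarrow(3)$ is immediate: since $\|\Phi-\Psi\|\le\|\Phi-\Psi\|_{cb}$ for every map $\Psi$, we have $\dist(\delta_d\otimes S_C,\MU(dk))\le\dcb(\delta_d\otimes S_C,\MU(dk))$ for each $d$, so taking the infimum over $d$ shows that $(2)$ forces $(3)$.

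For $(1)\Rightarrow(2)$, recall from Lemma~\ref{lem:compactconv} that $\cl F_k=\overline{\bigcup_d\conv(\cl F_k(d))}$. So if $C\in\cl F_k$, then given $\epsilon>0$ I can find some $d$ and some $C'\in\conv(\cl F_k(d))$ with $\sum_{i,j}|c_{i,j}-c'_{i,j}|<\epsilon$; such entrywise closeness is available because we are working in the finite-dimensional space $M_k$. By Theorem~\ref{thm:gen}, $\delta_d\otimes S_{C'}\in\MU(dk)$, so
\[\dcb(\delta_d\otimes S_C,\MU(dk))\le\|\delta_d\otimes(S_C-S_{C'})\|_{cb}=\|S_{C-C'}\|_{cb},\]
using the identity $\|\delta_d\otimes\psi\|_{cb}=\|\psi\|_{cb}$ for the unital completely positive map $\delta_d$, recorded just before Proposition~\ref{prop:depoleq}. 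Writing $S_{C-C'}=\sum_{i,j}(c_{i,j}-c'_{i,j})S_{E_{i,j}}$ and noting that each $S_{E_{i,j}}\colon X\mapsto E_{i,i}XE_{j,j}$ is a complete contraction, the triangle inequality yields $\|S_{C-C'}\|_{cb}\le\sum_{i,j}|c_{i,j}-c'_{i,j}|<\epsilon$. Since $\epsilon$ was arbitrary, $\inf_d\dcb(\delta_d\otimes S_C,\MU(dk))=0$, which is $(2)$.

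Finally, $(3)\Rightarrow(1)$ is where the substantive content of the section is used. Assuming $(3)$, for each $n\in\bN$ there is some $d_n$ with $\dist(\delta_{d_n}\otimes S_C,\MU(d_nk))<1/n$. Theorem~\ref{thm:main} then produces $\widehat C_n\in\conv(\cl F_k(2d_n))\subseteq\bigcup_{d'}\conv(\cl F_k(d'))$ with $|c_{i,j}-(\widehat C_n)_{i,j}|<2/n$ for all $i,j$. Hence $\widehat C_n\to C$ entrywise, and therefore in norm, so $C\in\overline{\bigcup_{d'}\conv(\cl F_k(d'))}=\cl F_k$, establishing $(1)$. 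The only genuine difficulty lies in this last step, and it is precisely the one discharged by Theorem~\ref{thm:main}; everything else is the routine finite-dimensional observation that entrywise convergence of $k\times k$ matrices coincides with norm convergence and controls the Schur-multiplier (cb-)norm, which is exactly what allows the two distance notions and the two convergence notions to be passed between one another freely here.
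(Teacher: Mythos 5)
Your proof is correct and follows essentially the same route as the paper's: $(1)\Rightarrow(2)$ and $(2)\Rightarrow(3)$ are the same routine norm estimates (you merely make the paper's appeal to finite-dimensional norm equivalence explicit by decomposing $S_{C-C'}$ into matrix-unit multipliers), and $(3)\Rightarrow(1)$ invokes Theorem~\ref{thm:main} exactly as the paper does. No gaps.
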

\begin{proof}
  Given $C\in \F_k$, there exist $d_n\in \bN$ and $C_n\in \F_k(d_n)$ with $C_n\to C$, entrywise, as $n\to \infty$. Then \[\dcb(\delta_{d_n}\otimes S_C,\MU(d_nk))\le \|\delta_{d_n}\otimes S_C-\delta_{d_n}\otimes S_{C_n}\|_{cb} = \|S_C-S_{C_n}\|_{cb}\to 0\]
  as $n\to \infty$, since all norms are equivalent in finite dimensions. Hence the infimum above is $0$.
  
  Clearly, the second statement implies the third, since \[\dist( \delta_d \otimes S_C, \MU(dk)) \le \dcb ( \delta_d \otimes S_C, \MU(dk)).\]
  
  Finally, suppose the infimum in (3) is $0$ for $C=(c_{i,j})$. By Theorem~\ref{thm:main}, given $\epsilon>0$, there exists $d_\epsilon\in\bN$ and $\widehat{C}=(\widehat{c}_{i,j})\in \conv(\F_k(2d_\epsilon))$ such that $|c_{i,j} - \widehat{c}_{i,j}| <2\epsilon$, for all $1 \le i,j \le k$.  It follows that \[C\in \overline{\bigcup_{\epsilon>0} \conv(\F_k(2d_\epsilon))}\subseteq \F_k,\] which completes the proof. \end{proof}
 
 \begin{remark} The equivalence of (2) and (3) in the above theorem is somewhat surprising, since crude estimates give
 \[ \dist (\delta_d \otimes S_C, \MU(dk)) \le \dcb ( \delta_d \otimes S_C, \MU(dk)) \le dk \cdot \dist( \delta_d \otimes S_C, \MU(dk)).\]
 We do not know if better bounds, that are independent of $d$ and $k$, can be obtained for these two distances to the mixed unitaries. It is possible that, for a mixed unitary map $\Phi$, we have that $\| \delta_d \otimes S_C-\Phi\|_{cb} \leq M \| \delta_d \otimes S_C - \Phi\|$ for some constant $M>0$ that is independent of $d$ and $k$.
\end{remark}

\begin{remark} By~\cite[(3.15)]{hm11}, \[\dcb( \delta_{d+1} \otimes T, \MU((d+1)k)) \le \dcb(\delta_d \otimes T, \MU(dk)) + \frac{1}{d+1},\]so that
\[ \inf_d \dcb(\delta_d \otimes T, \MU(dk)) = \lim_{d\rightarrow\infty} \dcb( \delta_d \otimes T, \MU(dk)).\]
A similar estimate shows that 
\[ \inf_d \dist(\delta_d \otimes T, \MU(dk)) = \lim_d \dist (\delta_d \otimes T, \MU(dk)).\]
 Our last theorem should be compared to~\cite[Theorem~3.6]{hm15}, which also proves the equivalence of three statements. Their first two statements are our (1) and (2) (with the infimum replaced by the limit) and their third statement involves factorisation through an ultrapower of the hyperfinite II${}_1$ factor. Their proof that (2) implies (1) first shows that (2) implies this factorisation result, then that this factorisation result implies (1).
\end{remark}

\section*{Acknowledgements}
The authors would like to thank the referee who suggested that the results of Section 3 should be true without reliance on the cb-norm.

RHL is grateful to the University College Dublin Seed Funding Visiting Professors programme
for their support. VIP is supported by the Natural Sciences and Engineering Research Council (NSERC) grant number 03784. SP is supported by NSERC Discovery Grant number 1174582, the Canada Foundation for Innovation (CFI) grant number 35711, and the Canada Research Chairs (CRC) Program grant number 231250. RHL and SP wish to acknowledge the Institute for Quantum Computing,
University of Waterloo for their kind hospitality during their visits in
June 2018. MR holds a
Postdoctoral Fellowship in Pure Mathematics at the University of
Waterloo.

\bigskip

\begin{thebibliography}{99}

\bibitem{DJ} {\sc K.~Dykema and K.~Juschenko},
  {\it Matrices of unitary moments},
  Math. Scand. (2011), vol.~109(2), 225--239.

\bibitem{hm11}
  {\sc U. Haagerup and M. Musat},
  {\it Factorization  and  Dilation  Problems  for  Completely Positive  Maps  on  von  Neumann  Algebras},
  Comm. Math. Phys. (2011), vol.~303(2),  555--594.
  
\bibitem{hm15}
  {\sc U. Haagerup and M. Musat},
  {\it An Asymptotic Property of Factorizable Completely Positive Maps and the Connes Embedding Problem},
  Comm. Math. Phys. (2015), vol.~338(2), 721--752.
  
\bibitem{mr} {\sc M. Musat and M. R\o rdam},
  {\it Non-closure of quantum correlation matrices and factorizable channels that require infinite dimensional ancilla},
  arXiv:1806.10242.
  
\bibitem{per}
  {\sc C. O'Meara and R. Pereira},
  {\it Self-dual maps and symmetric bistochastic matrices},
  Lin. Multilin. Alg. (2013), vol.~61(1), 23--34.
  
\bibitem{Pa} {\sc V. I. Paulsen},
  {\it Completely Bounded Maps and Operator Algebras},
  Cambridge Studies in Advanced Mathematics, 78, Cambridge University Press, 2003.
  
\bibitem{watrous}
  {\sc J. Watrous},
  {\it The Theory of Quantum Information},
  Cambridge University Press, 2018.
  
\bibitem{ydx}
  {\sc N. Yu, R. Duan and Q. Xu},
  {\it Bounds on the Distance Between a Unital Quantum Channel and the Convex Hull of Unitary Channels},
  IEEE Trans. Inf. Th. (2017), vol.~63(2), 1299--1310.
\end{thebibliography}
\end{document}